\newtheorem{RK}{Remark}
\newcommand{\nn}{\nonumber}
\newcommand{\bit}{\begin{itemize}}
\newcommand{\eit}{\end{itemize}}
\newcommand{\mbb}{\mathbb}
\newcommand{\yue}[1]{\ifthenelse{\boolean{showcomments}}
{ \textcolor{red}{(Yue says:  #1)}}{}}
\DeclareMathOperator*{\argmax}{argmax}
\DeclareMathOperator*{\argmin}{argmin}
\def\mc{\mathcal}
\newtheorem{Theorem}{Theorem}
\newtheorem{lemma}{Lemma}
\newtheorem{corollary}{Corollary}
\newtheorem{definition}{Definition}
\begin{document}
\title{Generic Cospark of a Matrix Can Be Computed in Polynomial Time}


\author{\IEEEauthorblockN{Sichen Zhong\IEEEauthorrefmark{1} and 
Yue Zhao\IEEEauthorrefmark{2}}
\IEEEauthorblockA{\IEEEauthorrefmark{1}Department of  Applied Mathematics and Statistics, \IEEEauthorrefmark{2}Department of  Electrical and Computer Engineering \\Stony Brook University, Stony Brook, NY, 11794, USA\\
Emails: \{sichen.zhong, yue.zhao.2\}@stonybrook.edu
}
}


\maketitle

\begin{abstract}
The cospark of a matrix is the cardinality of the sparsest vector in the column space of the matrix. Computing the cospark of a matrix is well known to be an NP hard problem. Given the sparsity pattern (i.e., the locations of the non-zero entries) of a matrix, if the non-zero entries are drawn from independently distributed continuous probability distributions, we prove that the cospark of the matrix equals, with probability one, to a particular number termed the \emph{generic cospark} of the matrix. The generic cospark also equals to the \emph{maximum} cospark of matrices consistent with the given sparsity pattern. We prove that the generic cospark of a matrix can be computed in polynomial time, and offer an algorithm that achieves this. 
\end{abstract}

\section{Introduction} \label{sec:intro}

The \emph{cospark} of a matrix $A \in \mathbb{R}^{m \times n}, m>n$\footnote{We note that the results in this paper can be straightforwardly generalized to complex numbers.}, denoted by $\textit{cospark($A$)}$, is defined to be the \emph{cardinality of the sparsest vector in the column space of $A$} \cite{candes2005decoding}. In other words,  $\textit{cospark($A$)}$ is the optimum value of the following $l_0$-minimization problem: 
\begin{align}
\underset{x}{\text{minimize}} &~~ ||Ax||_0,  \label {cosparkprob}\\
\text{subject to} &~~ x \neq 0, \nn
\end{align}
where $||Ax||_{0}$ is the number of nonzero elements in the vector $Ax$. 
It is well known that solving \eqref{cosparkprob} is an NP-hard problem. Indeed, it is equivalent to computing the \emph{spark} of an orthogonal complement of $A$ \cite{candes2005decoding}, where the spark of a matrix is defined to be the \emph{smallest number of linearly dependent columns of it} \cite{donoho2003optimally}. Specifically, for $A$ with a full column rank, we can find an orthogonal complement $A^{\bot}$ of it, 
and \eqref{cosparkprob} is equivalent to 
\begin{align}
 \underset{x}{\text{minimize}} &~~ ||x||_0,  \label{sparkprob}\\
 \text{subject to} &~~ A^{\bot}x = 0, ~x \ne 0, \nn
\end{align}
and the optimal value of \eqref{sparkprob} is the spark of $A^{\bot}$, denoted by $\textit{spark($A^{\bot}$)}$. Computing spark is known to be NP hard \cite{tillmann2014computational}. 

The role of $cospark(A)$ has been studied in decoding under sparse measurement errors where $A$ is the coding matrix \cite{candes2005decoding}. In particular, $\frac{cospark(A)}{2}$ gives the maximum number of errors that an ideal decoder can tolerate for exact recovery. Closely related to this is the role of $spark(A^{\bot})$ in characterizing the ability to perform compressed sensing \cite{candes2005decoding} \cite{donoho2003optimally}. Spark is also related to notions such as mutual coherence  \cite{donoho2003optimally}\cite{gribonval2003sparse}  and Restrict Isometry Property (RIP) \cite{candes2005decoding} \cite{candes2006stable} 
which provide conditions under which sparse recovery can be performed using $l$-1 relaxation. 
Last but not least, in addition to its role in the sparse recovery literature, cospark \eqref{cosparkprob} also plays a central role in security problems in cyber-physical systems (see \cite{zhao2016minimum} among others). 


In this paper, we study the problem of computing the cospark of a matrix. Although it is proven that $\eqref{cosparkprob}$ is an NP hard problem, we show that the cospark that a matrix ``generically'' has can in fact be computed in polynomial time. Specifically, given the ``sparsity pattern'', (i.e., the locations of all the non-zero entries of $A$,) $cospark(A)$ equals, \emph{with probability one}, to a particular number which we termed the \emph{generic cospark} of $A$, if the non-zero entries of $A$ are drawn from independent continuous probability distributions. Then, we develop an efficient algorithm that computes the generic cospark in \emph{polynomial time}. 
 
\section{Preliminaries}  \label{sec:prelim}
\subsection{Generic Rank of a Matrix}
For a matrix $A\in \mbb{R}^{m\times n}$, we define its $\textit{sparsity pattern}$ $S = \{ (i,j) | A_{ij} \ne 0, 1 \le i \le m, 1 \le j \le n \}$. Given a sparsity pattern $S$, we denote $A^{S}$ to be the set of all matrices with sparsity pattern $S$ over the field $\mathbb{R}$. Since there is a one to one mapping between $S$ and $A^{S}$, we use $S$ and $A^{S}$ interchangeably to denote a sparsity pattern in the remainder of the paper. 

The generic rank of a matrix with sparsity pattern $S$ is defined as follows. 

\begin{definition}[Generic Rank]\label{def:grank}
Given $S$, the \textit{generic rank} of $A^{S}$ is $sprank(A^{S}) \triangleq \sup_{A \in A_{S}} rank(A)$. 
\end{definition}
Clearly, if $sprank(A^S) < n$, the optimal value of \eqref{cosparkprob} is zero. 
We will thus focus on the case $sprank(A^S) = n$ for the remainder of the paper.
 
The following lemma states that the generic rank indeed ``generically'' equals to the rank of a matrix \cite{sprankbook}. 
\begin{lemma} \label{lem:grank}
Given $S$, $rank(A) = sprank(A^S)$ with probability one, if the non-zero entries of $A$ are drawn from independently distributed continuous probability distributions. 
\end{lemma}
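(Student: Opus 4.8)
The plan is to prove the two inequalities $\mathrm{rank}(A) \le \mathrm{sprank}(A^S)$ and $\mathrm{rank}(A) \ge \mathrm{sprank}(A^S)$ separately, where the first holds deterministically and the second holds with probability one. The first is immediate: by Definition~\ref{def:grank}, $\mathrm{sprank}(A^S)$ is the supremum of $\mathrm{rank}(A)$ over all $A \in A^S$, so every realization $A$ with sparsity pattern $S$ satisfies $\mathrm{rank}(A) \le \mathrm{sprank}(A^S)$. It therefore suffices to show that a random $A$ attains this bound almost surely, i.e. $\mathrm{rank}(A) \ge \mathrm{sprank}(A^S)$ with probability one.

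To set up the reverse inequality, write $r \triangleq \mathrm{sprank}(A^S)$. By definition of the supremum (and since rank takes finitely many integer values, the sup is attained), there exists a matrix $A_0 \in A^S$ with $\mathrm{rank}(A_0) = r$. Hence $A_0$ possesses some $r \times r$ submatrix, indexed by a row set $I$ and a column set $J$ with $|I| = |J| = r$, whose determinant is nonzero: $\det\big((A_0)_{I,J}\big) \ne 0$. The key observation is to regard the minor $\det(A_{I,J})$ not as a number but as a polynomial $p$ in the free entries $\{A_{ij} : (i,j) \in S\}$, with all entries outside $S$ fixed to zero. Because $A_0$ itself has sparsity pattern $S$, the point given by its free entries already respects the structural zeros, and $p$ evaluated there equals $\det((A_0)_{I,J}) \ne 0$. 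Consequently $p$ is \emph{not} the identically-zero polynomial in the free variables.

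It then remains to invoke the standard fact that a polynomial which is not identically zero, evaluated at independent continuous random variables, is nonzero with probability one. Applying this to $p$ and the independent continuous entries $\{A_{ij} : (i,j) \in S\}$ yields $\det(A_{I,J}) \ne 0$ with probability one, so the submatrix $A_{I,J}$ is nonsingular and $\mathrm{rank}(A) \ge r$ almost surely. Combining with the deterministic bound $\mathrm{rank}(A) \le r$ gives $\mathrm{rank}(A) = \mathrm{sprank}(A^S)$ with probability one, as claimed.

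I expect the main obstacle to lie in the last step, in making the probabilistic argument airtight. The crux is that the zero set of a nonvanishing polynomial is a proper algebraic variety, which has Lebesgue measure zero; if the entries are absolutely continuous this immediately gives probability zero, and for merely atomless distributions one obtains the same conclusion by an induction on the number of free variables, conditioning on all but one variable and using that a nonzero univariate polynomial has finitely many roots while a continuous distribution has no atoms. The only other point requiring care is the one already handled above, namely verifying that restricting the minor to the free variables (after imposing the structural zeros) does not accidentally make it vanish identically; this is precisely what the choice of $I$ and $J$ from the maximum-rank matrix $A_0$ guarantees.
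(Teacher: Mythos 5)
Your proof is correct. The paper does not actually prove this lemma; it is quoted as a known fact with a citation to the literature on structural (generic) rank, so there is no in-paper argument to compare against. Your route --- attainment of the supremum at some $A_0\in A^S$, extraction of a nonsingular $r\times r$ minor, viewing that minor as a polynomial in the free entries which is not identically zero, and then the standard fact that a nonvanishing polynomial evaluated at independent atomless random variables is nonzero almost surely --- is exactly the classical argument behind this result, and your handling of the two potential pitfalls (that the sup is attained because rank is integer-valued and bounded, and that restricting to the structural support does not kill the minor) is sound. The only cosmetic remark: since $A^S$ requires the entries indexed by $S$ to be \emph{nonzero}, you could note that continuous distributions place no atom at zero, so the random $A$ lies in $A^S$ almost surely; this does not affect the validity of your argument.
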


\subsection{Matching Theory Basics}
We now introduce some basics from classical matching theory \cite{diestel2016graph} which are necessary for us to introduce the results in the remainder of the paper. 

For a bipartite graph $G(X,Y,E)$, a subset of edges $\mathcal{N} \subseteq E$ is a $\textit{matching}$ if all the edges in $\mathcal{N}$ are vertex disjoint. A $\textit{max matching}$ from $X$ onto $Y$ 
is a matching with the maximum cardinality. A \emph{perfect matching from $X$ onto $Y$} is a max matching where every vertex in $Y$ is incident to an edge in the matching. 

Consider a (not necessarily maximum) matching $\mathcal{N}$. A vertex is called \textit{matched} if it is incident to some edge in $\mathcal{N}$, and \textit{unmatched} otherwise. An $\textit{alternating path}$ with respect to $\mathcal{N}$ is a path which alternates between using edges in $E \setminus \mathcal{N}$ and edges in $\mathcal{N}$, or vice versa. An $\textit{augmenting path}$ w.r.t $\mathcal{N}$ is an alternating path w.r.t. $\mathcal{N}$ which starts and ends at unmatched vertices. 
With an augmenting path $P$, 
it can be easily shown that the symmetric difference\footnote{The \emph{symmetric difference} of two sets $S_1$ and $S_2$ is defined as $S_1\oplus S_2 = \left(S_1 \cup S_2\right) \setminus \left(S_1\cap S_2\right)$.} 
$\mathcal{N} \oplus P$ gives a matching with size $|\mathcal{N}| + 1$. 


\subsection{Generic Rank as Max Matching}
We now introduce an equivalent definition of generic rank via matching theory. 
A sparsity pattern $A^S$ can be represented as a \emph{bipartite graph} as follows \cite{sprankbook}. Let $G(X,Y,E)$ be a bipartite graph whose a) vertices $X = \{1,2,\ldots,m\}$ correspond to all the row indices of $A^{S}$, b) vertices $Y = \{1,2,\ldots,n\}$ correspond to all the column indices of $A^{S}$, and c) edges in $E=S$ correspond to all the non-zero entries of $A^S$. Accordingly, we also denote the bipartite graph for a sparsity pattern $S$ by $G(X, Y, S)$. 

The following lemma states the equality between $sprank(A^{S})$ and the max matching of $G(X,Y,S)$ \cite{sprankbook}. 
\begin{lemma} \label{lem:matchgrank}
Given $G(X,Y,S)$, the generic rank $sprank(A^{S})$ equals to the cardinality of the maximum bipartite matching on $G$. 
\end{lemma}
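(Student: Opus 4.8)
The plan is to connect the algebraic characterization of rank with the combinatorial structure of the bipartite graph through the Leibniz expansion of the determinant. First I would recall the standard fact that $rank(A)$ equals the largest integer $k$ for which $A$ possesses a $k \times k$ submatrix with nonzero determinant. Since the nonzero entries are drawn from continuous distributions and may be treated as independent indeterminates, $sprank(A^S) = \sup_{A \in A^S} rank(A)$ then equals the largest $k$ such that some $k \times k$ submatrix of $A$ has determinant that is \emph{not identically zero} as a polynomial in the nonzero entries of $A$. Indeed, a nonzero multivariate polynomial vanishes only on a set of measure zero, so whenever such a submatrix exists, almost every assignment of the entries realizes rank at least $k$; conversely, if every $k \times k$ determinant is identically zero, no realization can reach rank $k$.

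Next, fix a choice of $k$ rows $R \subseteq X$ and $k$ columns $C \subseteq Y$, and let $B$ denote the corresponding $k \times k$ submatrix. By the Leibniz formula,
\begin{equation}
\det(B) = \sum_{\sigma} \mathrm{sgn}(\sigma) \prod_{i \in R} A_{i,\sigma(i)},
\end{equation}
where $\sigma$ ranges over the bijections from $R$ onto $C$. The central observation is that the cells $\{(i,\sigma(i)) : i \in R\}$ picked out by a bijection $\sigma$ form a set with one entry per row and one per column, and distinct bijections select distinct such cell sets; since all $A_{ij}$ are independent indeterminates, distinct bijections therefore yield distinct monomials. Hence no cancellation between summands can occur, and $\det(B)$ is not identically zero if and only if there exists a bijection $\sigma : R \to C$ with $(i,\sigma(i)) \in S$ for every $i \in R$.

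I would then interpret this surviving permutation combinatorially: a bijection $\sigma : R \to C$ with all $(i,\sigma(i)) \in S$ is precisely a matching of size $k$ in $G(X,Y,S)$ that saturates both $R$ and $C$. Consequently, a $k \times k$ submatrix with not-identically-zero determinant exists if and only if $G$ contains a matching of size $k$. Taking the largest such $k$ on both sides yields $sprank(A^S)$ on the left and the maximum matching cardinality on the right, completing the argument.

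The main obstacle is the no-cancellation step. I expect the crux to lie in arguing rigorously that the monomials $\prod_{i \in R} A_{i,\sigma(i)}$ indexed by distinct bijections are genuinely distinct in the free variables $\{A_{ij} : (i,j) \in S\}$, so that a surviving term cannot be annihilated by another. This is exactly what upgrades the implication to an equivalence between ``$\det(B) \not\equiv 0$'' and ``$S$ admits a matching saturating $R$ and $C$'', and it is where the independence/continuity hypothesis on the entries does its work, guaranteeing that the supremum in Definition~\ref{def:grank} is attained generically.
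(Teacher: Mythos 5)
Your argument is correct. Note that the paper does not actually prove this lemma---it is stated as a known result with a citation to the structural-rank literature---so there is no in-paper proof to compare against; what you have written is precisely the standard determinantal argument underlying that citation: reduce $sprank$ to the existence of a $k\times k$ minor that is not identically zero as a polynomial in the free entries, expand by the Leibniz formula, and observe that distinct bijections contribute distinct monomials (one cell per row and per column), so no cancellation occurs and a surviving term is exactly a size-$k$ matching. The one point worth stating explicitly is that realizing $sprank \ge k$ requires a witness matrix lying in $A^S$, i.e., with \emph{all} entries indexed by $S$ nonzero; your measure-theoretic remark covers this, since the complement of the zero set of the nonvanishing minor and the set of matrices with all $S$-entries nonzero are both of full measure, hence intersect.
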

Accordingly, finding a max matching on this graph using the Hopcroft-Karp algorithm allows us to find the generic rank with $\mathcal{O}(|S|\sqrt{m+n})$ complexity \cite{hopcroft1973n}.

\section{Generic Cospark}

Similarly to the supremum definition of generic rank (cf. Definition \ref{def:grank}), given the sparsity pattern of a matrix, we define \emph{generic cospark} as follows. 
\begin{definition} [Generic Cospark]
Given $S$, the $\textit{generic cospark}$ of $A^{S}$ is $spcospark(A^{S}) \triangleq \sup_{A \in A_{S}} cospark(A)$.
\end{definition}
In a spirit similar to the multiple interpretations of generic rank as in Section \ref{sec:prelim}, we provide a \emph{probabilistic} view and a \emph{matching theory} based view of generic cospark as follows. 

\subsection{Cospark Equals to Generic Cospark With Probability One}
For any $T \subset [m]$, let $A_T$ and $A^{S}_{T}$ represent the matrix $A$ and the set of matrices $A^{S}$ restricted to the rows $T$ respectively. A class of matrices which has cospark equal to generic cospark are those which satisfy the following property: \\

\begin{lemma} \label{lem:matppt}
Given any sparsity pattern $S$ so that $sprank(A^{S}) = n$ for $A^{S} \subset \mathbb{R}^{m \times n}$, 
for any $A\in A^S$, 
if $rank(A_T) = sprank(A^{S}_{T}), \forall T \subseteq [m]$, then $cospark(A) = spcospark(A^S)$.
\end{lemma}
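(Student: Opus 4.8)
The plan is to reduce $cospark(A)$ to a purely rank-theoretic quantity over subsets of rows, and then exploit the hypothesis to pin down the supremum defining $spcospark(A^S)$. First I would establish a combinatorial characterization of cospark. For any $x \neq 0$, writing $Z(x) = \{ i \in [m] : (Ax)_i = 0 \}$ for its zero set, we have $\|Ax\|_0 = m - |Z(x)|$, so minimizing $\|Ax\|_0$ is the same as maximizing $|Z(x)|$. The key observation is that $\max_{x\neq 0} |Z(x)|$ equals $\max\{ |Z| : Z \subseteq [m],\, rank(A_Z) < n \}$: any $x\ne 0$ lies in the kernel of $A_{Z(x)}$, forcing $rank(A_{Z(x)}) < n$; conversely any row set $Z$ with $rank(A_Z)<n$ admits a nonzero $x \in \ker(A_Z)$, whose zero set contains $Z$. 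This yields the identity
\begin{equation}
cospark(A) = m - \max\{\, |Z| : Z \subseteq [m],\ rank(A_Z) < n \,\}. \label{eq:char}
\end{equation}
Because $rank(A_{[m]}) = rank(A) = sprank(A^S) = n$ under the hypothesis, the maximum in \eqref{eq:char} is never attained at $Z=[m]$, so it is well defined and $cospark(A)\ge 1$.

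Next I would introduce the generic counterpart $K \triangleq \max\{\, |Z| : Z\subseteq[m],\ sprank(A^S_Z) < n \,\}$ and prove $spcospark(A^S) = m - K$ by two matching bounds. For the upper bound, note that $rank(A_Z) \le sprank(A^S_Z)$ for every $A \in A^S$ and every $Z$, which is immediate from Definition \ref{def:grank} (generic rank is a supremum of ranks). Hence $sprank(A^S_Z)<n$ implies $rank(A_Z)<n$, so the feasible set in \eqref{eq:char} contains every $Z$ counted by $K$, giving $cospark(A) \le m - K$ for all $A\in A^S$ and therefore $spcospark(A^S)\le m-K$. For the matching lower bound I would invoke the hypothesis on the specific $A$: since $rank(A_Z)=sprank(A^S_Z)$ for all $Z\subseteq[m]$, the two feasible sets coincide and \eqref{eq:char} collapses to $cospark(A) = m-K$. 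Combining the two bounds gives $spcospark(A^S) = m - K = cospark(A)$, which is exactly the claim.

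The main obstacle I anticipate lies in making the characterization \eqref{eq:char} airtight, rather than in the final supremum argument, which is then immediate. In particular I must treat the quantifiers carefully: $Z(x)$ is the \emph{exact} zero set of a chosen $x$, whereas the rank condition ranges over \emph{all} row subsets, so I must verify both inclusions in the equality of the two maxima and confirm the precise equivalence between the existence of a nonzero vector supported off $Z$ in the column space and the rank deficiency of the row-restricted submatrix $A_Z$. I would also emphasize that the one-sided monotonicity $rank(A_Z)\le sprank(A^S_Z)$ is exactly what delivers the upper bound for \emph{every} $A\in A^S$, while the hypothesis $rank(A_T)=sprank(A^S_T)$ for all $T\subseteq[m]$ is precisely the strengthening that upgrades this inequality to equality on the relevant subsets and thereby makes the particular $A$ attain the supremum.
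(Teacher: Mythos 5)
Your proof is correct and rests on the same mechanism as the paper's: the zero set $U$ of a minimizer of $\|Ax\|_0$ forces $rank(A_U)<n$, and the hypothesis together with the monotonicity $rank(C_U)\le sprank(A^{S}_U)=rank(A_U)$ transfers this rank deficiency to every $C\in A^S$, yielding $cospark(C)\le cospark(A)$ and hence the claim. Your packaging via the identity $cospark(A)=m-\max\{|Z| : rank(A_Z)<n\}$ and its generic counterpart $m-K$ is a clean (and slightly more informative) reorganization of the paper's direct vector-level argument, but not a genuinely different route.
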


\begin{proof}
Let $x^* = \argmin_{x \ne 0} ||Ax||_0$, and suppose $U = \{ i | a_i x^{*} = 0 \}$, where $a_i$ is the $i$th row of $A$. Since $A_Ux^{*} = 0$, $rank(A_U) < n$. Now consider another matrix $C \in \mathbb{R}^{m \times n}$ with sparsity pattern $S$. 
Since $rank(C_U) \le rank(A_U) = sprank(A^{S}_U) < n$, $\ker(C_U)$ is also nonempty, meaning there exists a nonzero vector $h \in \mathbb{R}^{n}$ such that $C_Uh = 0$. Because $A_{U^c}x^{*}$ has no zero entries, we also have $||C_{U^c}h||_0 \le ||A_{U^c}x^{*}||_0 = ||Ax^{*}||_0$. This means $||Ch||_0 = ||C_Uh||_0 + ||C_{U^c}h||_0 \le ||A_{U^c}x^{*}||_0 = ||Ax^{*}||_0$. Hence, if $\hat{x} = \argmin_{x \ne 0} ||Cx||_0$, it follows $cospark(C) = ||C\hat{x}||_0 \le ||Ch||_0 \le ||Ax^*||_0 = cospark(A)$, which proves the lemma.
\end{proof}
We note that the property $rank(A_T) = sprank(A^{S}_{T}), \forall T \subseteq [m]$ is known as the $\textit{matching property}$ of matrix $A$ \cite{mccormick1983combinatorial}. 

Now, we have the following theorem showing that the generic cospark indeed ``generically'' equals to the cospark. 

\begin{Theorem} \label{thm:gcospark}
Given $S$, $cospark(A) = spcospark(A^S)$ with probability one, if the non-zero entries of $A$ are drawn from independently distributed continuous probability distributions.
\end{Theorem}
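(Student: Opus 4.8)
The plan is to show that, with probability one, the randomly drawn matrix $A$ satisfies the \emph{matching property} $rank(A_T) = sprank(A^S_T)$ for all $T \subseteq [m]$, and then to invoke Lemma~\ref{lem:matppt} directly to conclude $cospark(A) = spcospark(A^S)$. In other words, the theorem follows once we establish that the hypothesis of Lemma~\ref{lem:matppt} holds almost surely, so the entire burden is to upgrade the single-event statement of Lemma~\ref{lem:grank} into a statement that holds for all row-subsets simultaneously.

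First I would fix a subset $T \subseteq [m]$ and observe that the submatrix $A_T$ has sparsity pattern $S_T$, the restriction of $S$ to the rows in $T$, and that the nonzero entries of $A_T$ are precisely those nonzero entries of $A$ lying in rows $T$. Since these entries remain a collection of independent draws from continuous distributions, Lemma~\ref{lem:grank} applies verbatim to $A_T$ and gives $rank(A_T) = sprank(A^S_T)$ with probability one. Writing $E_T$ for the event $\{rank(A_T) < sprank(A^S_T)\}$ (note that $rank(A_T) \le sprank(A^S_T)$ always holds, as the right-hand side is a supremum over all matrices with pattern $S_T$), this says $\Pr(E_T) = 0$. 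Because $[m]$ has only $2^{m}$ subsets, the union $\bigcup_{T \subseteq [m]} E_T$ is a \emph{finite} union of null events and hence itself has probability zero.

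On the complement of this union -- an event of probability one -- we have $rank(A_T) = sprank(A^S_T)$ for every $T \subseteq [m]$ at once, i.e. $A$ enjoys the matching property, so Lemma~\ref{lem:matppt} yields $cospark(A) = spcospark(A^S)$, proving the claim. I expect the only delicate point to be the reduction in the second step: one must check that Lemma~\ref{lem:grank} can legitimately be applied row-subset by row-subset, which rests entirely on the fact that the nonzero entries of each $A_T$ inherit mutual independence and continuity from those of $A$. Once that is granted, the remainder is just a union bound exploiting the finiteness of the power set of $[m]$, and no genuine obstacle -- in particular no measure-theoretic subtlety beyond finite subadditivity -- remains.
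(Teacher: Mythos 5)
Your proposal is correct and follows essentially the same route as the paper: apply Lemma~\ref{lem:grank} to each row-submatrix $A_T$, take a union bound over the finitely many subsets $T \subseteq [m]$, and invoke Lemma~\ref{lem:matppt}. You simply make explicit the finite-subadditivity step that the paper's two-sentence proof leaves implicit.
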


\begin{proof}
If we have a matrix $A$ with sparsity pattern $S$ whose nonzeros are drawn from independent continuous distributions, then every submatrix of rows has rank equaling generic rank w. p. 1 (cf. Lemma \ref{lem:grank}). 
This immediately implies $cospark(A) = spcospark(A^S)$ w. p. 1 by Lemma \ref{lem:matppt}. 
\end{proof} 


\subsection{A Matching Theory based Definition of Generic Cospark}


Let $G(X,Y,S)$ be the  bipartite graph corresponding to $A^{S}\subseteq \mathbb{R}^{m \times n}$. 
For a subset of vertices $Z\subseteq X$, we define the induced subgraph $G(Z)$ as the bipartite graph $G(Z, N(Z), \{ (i,j) | i \in Z\ , j \in N(Z)\})$, where $N(Z)$ denotes the vertices in $Y$ adjacent to the set $Z$. 
$G(Z)$ is essentially a bipartite graph corresponding to 
the submatrix $A^{S}_{Z}$. 
We then have the following. 

\begin{lemma} \label{lem:opt}
Given G(X, Y, S), let $OPT \subset X$ be a largest subset such that the induced subgraph $G(OPT)$ has a max matching of size $n-1$. We have that 
$\textit{spcospark($A^S$)} = m - |OPT|$. 
\end{lemma}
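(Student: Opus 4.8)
The plan is to characterize cospark combinatorially and then sandwich $spcospark(A^S)$ between two matching-based inequalities. First I would observe that for any $A \in A^S$, a nonzero $x$ annihilates a row set $U$ (i.e. $A_U x = 0$) precisely when $rank(A_U) < n$, and minimizing $\|Ax\|_0$ is the same as maximizing the number of rows simultaneously annihilated. Thus for every $A$,
\begin{equation}
cospark(A) = m - \max\{\, |U| : U \subseteq X,\ rank(A_U) < n \,\}. \nonumber
\end{equation}

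For the upper bound $spcospark(A^S) \le m - |OPT|$, I would use that $G(OPT)$ has a max matching of size $n-1$, so $sprank(A^S_{OPT}) = n-1$ by Lemma \ref{lem:matchgrank}. Since $rank(A_{OPT}) \le sprank(A^S_{OPT}) < n$ holds for every $A \in A^S$, the kernel of $A_{OPT}$ is nontrivial; choosing $x \neq 0$ with $A_{OPT}x = 0$ yields $\|Ax\|_0 \le m - |OPT|$, hence $cospark(A) \le m - |OPT|$. As this bounds every $A$, the supremum defining $spcospark(A^S)$ is at most $m - |OPT|$.

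For the lower bound I would invoke a matrix $A$ possessing the matching property $rank(A_U) = sprank(A^S_U)$ for all $U \subseteq X$; such matrices exist since a generic realization has this property (cf. Lemma \ref{lem:grank} and the proof of Theorem \ref{thm:gcospark}). For such $A$, combining the matching property with Lemma \ref{lem:matchgrank} gives $rank(A_U) = $ (max matching of $G(U)$) for every $U$, so the task reduces to showing that the largest $U$ whose induced matching is strictly below $n$ has cardinality exactly $|OPT|$. The key combinatorial fact is that adjoining one row to $U$ raises the max matching of $G(U)$ by at most one. Together with the standing assumption $sprank(A^S)=n$ (so $G(X)$ already matches to size $n$), this forces any inclusion-maximal $U$ with matching $< n$ to have matching exactly $n-1$: if it were at most $n-2$ we could append any unused row and stay below $n$, contradicting maximality, while $U = X$ is excluded because $G(X)$ has a matching of size $n$. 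Hence such a maximal $U$ is a largest set inducing a matching of size $n-1$, i.e. $|U| = |OPT|$, giving $cospark(A) = m - |OPT|$ and therefore $spcospark(A^S) \ge m - |OPT|$.

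The \emph{main obstacle} is this last combinatorial identification — proving that the largest row set with induced matching $< n$ has the same cardinality as $OPT$, which has matching $= n-1$. It rests on the monotone, increment-by-at-most-one behavior of bipartite max matching under row addition, used in tandem with $sprank(A^S)=n$ to preclude ever reaching a matching of size $n$. The remaining ingredients — the rank/kernel description of cospark and the passage between rank, generic rank, and matching via Lemmas \ref{lem:grank} and \ref{lem:matchgrank} — are routine. As an alternative to the explicit lower-bound construction, one may instead derive it from Theorem \ref{thm:gcospark}, since a generic $A$ satisfies $cospark(A) = spcospark(A^S)$ with probability one and the computation above shows this common value equals $m - |OPT|$.
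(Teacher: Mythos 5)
Your proof is correct and follows essentially the same route as the paper, which in fact gives only an informal justification of this lemma (identify the sparsest $Ax$ with a largest row set of rank $<n$, then pass to matchings via Lemma \ref{lem:matchgrank}). The one step you supply that the paper glosses over --- that a maximum-cardinality row set with induced matching $<n$ must have matching exactly $n-1$, via the increment-by-at-most-one property together with $sprank(A^S)=n$ --- is exactly the right way to close that gap, and your appeal to a matrix with the matching property (finitely many subsets $T$, each a probability-one event by Lemma \ref{lem:grank}) for the lower bound is sound.
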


The intuition behind this matching theory based definition of $\textit{spcospark($A^S$)}$ is the following. To find the sparsest vector in the image of $A$, it is equivalent to find a largest set of rows in $A$, $OPT$, which span an $n-1$ dimensional subspace. With such a subset $OPT$, we can find a vector $x^{*}$ that satisfies $A_{OPT}x^* = 0$, and it is clear that $x^{*} \in \argmin_{x \ne 0} {||Ax||_0}$. 
Furthermore, based on the equivalence between generic rank and max matching from Lemma \ref{lem:matchgrank}, we arrive at the matching theory based definition of generic cospark in Lemma \ref{lem:opt}. 


\section{Efficient Algorithm for Computing Generic Cospark}
In this section, we introduce an efficient algorithm that computes the \emph{generic} cospark. 
This algorithm is based on a greedy approach motivated by Lemma \ref{lem:opt}. 

Given $G(X,Y,S)$, for any size $n-1$ subset of vertices $W \subset Y$, we define $X_W = \{x \in X | N(x) \subseteq W \}$. In other words, $X_W$ is the index set of rows of $A^{S}$ with a \emph{zero} entry in the remaining coordinate  $v = Y \setminus W$. 

We use $X_W$ as a basis to construct a candidate solution for $OPT$. The idea is to add a maximal subset of vertices $B \subset X_{W}^{c}$ to $X_W$, such that $\overline{X_W} = X_W \cup B$ has a matching of size $n-1$ onto $Y$. Specifically, we keep adding vertices $t \in X_{W}^{c}$ to $B$ as long as the submatrix corresponding to the index set $X_W \cup B$ has generic rank no greater than $n-1$. 
The following lemma shows that adding a vertex to $B$ can only increase the generic rank of $X_W \cup B$ by \emph{at most one}. 

\begin{lemma} \label{lem:inc1}
Given $G(X,Y,S)$, $\forall Z \subset X$ and $u \in X \setminus Z$, 
$sprank(A^{S}_{Z \cup \{u\}}) \le sprank(A^{S}_{Z}) + 1$.
\end{lemma}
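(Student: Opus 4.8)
The plan is to translate the statement into matching theory via Lemma \ref{lem:matchgrank} and then invoke the elementary fact that adjoining a single vertex to one side of a bipartite graph can enlarge a maximum matching by at most one. Concretely, $sprank(A^{S}_{Z})$ equals the size of a maximum matching in the induced subgraph $G(Z)$, and $sprank(A^{S}_{Z\cup\{u\}})$ equals the size of a maximum matching in $G(Z\cup\{u\})$. Since $Z\subset Z\cup\{u\}$ we have $N(Z)\subseteq N(Z\cup\{u\})$, so $G(Z)$ is precisely $G(Z\cup\{u\})$ with the single left-vertex $u$ (and its incident edges) removed. It therefore suffices to bound how much a maximum matching can grow when one left-vertex is reinstated.

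First I would fix a maximum matching $M$ of $G(Z\cup\{u\})$, of size $sprank(A^{S}_{Z\cup\{u\}})$. At most one edge of $M$ is incident to $u$; call it $e$ if it exists. Deleting $e$ leaves a set of edges $M'=M\setminus\{e\}$, all of whose left-endpoints lie in $Z$ and whose right-endpoints lie in $N(Z)$, because every matched $Z$-vertex is joined to one of its own neighbours. Hence $M'$ is a legitimate matching in $G(Z)$ with $|M'|\ge |M|-1$. Consequently the maximum matching of $G(Z)$ has size at least $|M|-1=sprank(A^{S}_{Z\cup\{u\}})-1$, i.e. $sprank(A^{S}_{Z})\ge sprank(A^{S}_{Z\cup\{u\}})-1$, which after rearranging is exactly the claimed inequality.

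An alternative, purely linear-algebraic route would instead use Lemma \ref{lem:grank}: draw the nonzero entries of $A$ from independent continuous distributions so that, with probability one, $rank(A_{Z})=sprank(A^{S}_{Z})$ and $rank(A_{Z\cup\{u\}})=sprank(A^{S}_{Z\cup\{u\}})$ hold simultaneously (the intersection of two probability-one events still has probability one). For this single realization, $A_{Z\cup\{u\}}$ is obtained from $A_{Z}$ by appending one row, so $rank(A_{Z\cup\{u\}})\le rank(A_{Z})+1$, and the bound on generic ranks follows at once. I expect no genuine obstacle here, only bookkeeping: in the matching argument one must check that deleting $u$'s edge leaves a valid matching of the smaller induced subgraph (the containment $N(Z)\subseteq N(Z\cup\{u\})$ handles this), and in the algebraic argument one must apply the probability-one statement to \emph{both} submatrices at once before comparing their ranks.
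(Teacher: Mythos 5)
Your proof is correct. The paper in fact states Lemma~\ref{lem:inc1} without any proof, treating it as immediate, so there is nothing to compare against; either of your two routes is a valid justification. The matching argument is the one most in the spirit of the paper's framework: via Lemma~\ref{lem:matchgrank} the claim reduces to the observation that a maximum matching of $G(Z\cup\{u\})$ loses at most the single edge incident to $u$ when restricted to $G(Z)$, and you correctly verify that the restriction is still a matching of the induced subgraph. The linear-algebraic route is also sound, and you rightly flag the one point that needs care there: Lemma~\ref{lem:grank} must be applied to both submatrices simultaneously (the intersection of two probability-one events is nonempty), so that a single realization $A$ witnesses $rank(A_{Z})=sprank(A^{S}_{Z})$ and $rank(A_{Z\cup\{u\}})=sprank(A^{S}_{Z\cup\{u\}})$ at once before the append-one-row rank bound is invoked. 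No gaps.
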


\begin{RK}
For a given $W$, depending on the order we visit the vertices in $X_{W}^{c}$, we could end up with different sets $B$, possibly of different sizes. 
However, we will prove that the optimal solution is recovered regardless. 
\end{RK}

$\overline{X_W}, \forall W$ are the candidate solutions for OPT, and we obtain the optimal solution by choosing the $\overline{X_W}$ with the largest cardinality, i.e. $X_f = \argmax_{W \subset Y, |W| = n-1} |\overline{X_W}|$. The generic cospark of $A^S$ then equals to $m - |{X_f}|$. 

The detailed algorithm is presented in Algorithm \ref{spcospark}.

\begin{algorithm}
\caption{Computing Generic Cospark}\label{spcospark}
\begin{algorithmic}[1]
\Procedure{spcospark}{$A^{S}$}
\State Initialization: Set $B = \emptyset, t = \emptyset$, and $X_f$ = $\emptyset$
\ForAll{ $W \subset Y$ of cardinality $n-1$} 
\State \parbox[t]{\dimexpr\linewidth-\algorithmicindent}{Scan through all $m$ vertices in $X$ \\ to find $X_W$ and let $T = X_{W}^{c}$ \strut}
\State Calculate $sprank(A^{S}_{X_W})$
\While {$sprank(A^{S}_{X_W \cup B \cup \{t\}}) \le n-1$}
\State Let $B = B \cup t$ \strut
\State \parbox[t]{\dimexpr\linewidth-\algorithmicindent}{Choose any element ${t}$ from $T$, and \\ set $ T = T \setminus t$}
\EndWhile \ and let $\overline{X_W} = X_{W} \cup B$
\If{$|X_f| < |\overline{X_W}|$}
\State Set $X_f = \overline{X_W}$
\EndIf
\State Set $B = \emptyset$
\EndFor
\State Return $X_f$, and $spcospark(A^S) = m - |{X_f}|$.
\EndProcedure
\end{algorithmic}
\end{algorithm}

\section{Proof of Optimality of Algorithm \ref{spcospark}} 
In this section, we prove that Algorithm \ref{spcospark} indeed solves the generic cospark. It is sufficient to prove that the set $X_f$ returned by the Algorithm satisfies the definition of $OPT$ in Lemma \ref{lem:opt}, i.e., $X_f$ is a subset of vertices of the largest size such that the induced subgraph $G(X_f)$ has a max matching of size $n-1$. 
Since $G(X_f)$ by construction has a max matching of size $n-1$, \emph{it is sufficient to prove that $X_f$ has the largest size, i.e., $|X_f| = |OPT|$}. 

To prove this, let us consider an optimal set $OPT\subset X$. We denote by $\mc{M}$ the set of $n-1$ edges of a max matching of $G(OPT)$. We denote by $W^{*} \subset Y$ the set of $n-1$ vertices in $Y$ corresponding to this max matching, and denote by $v = Y \setminus W^{*}$ the remaining vertex in $Y$. We will show that, starting with $W^*$, Algorithm \ref{spcospark} will return an $X_f$ such that $|X_f| \ge |OPT|$, and hence $|X_f| = |OPT|$. 
As the notations for this section are quite involved, an illustrative diagram is plotted in Figure \ref{fig:proof} to help clarify the proof procedure in the following.

\begin{figure}[t!]
\centering
\includegraphics[scale=0.60]{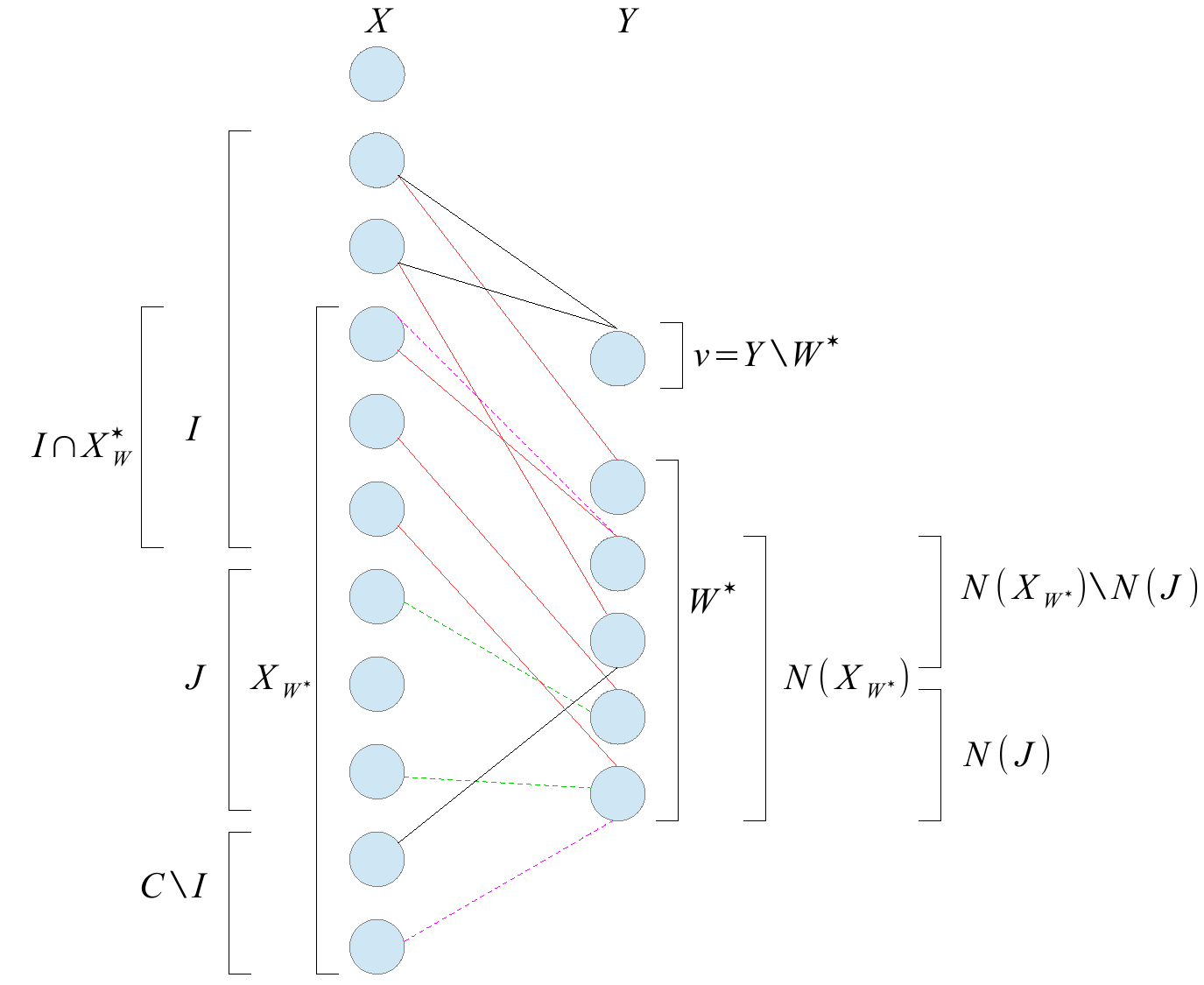} 
\caption{The above graph represents a sketch of the partition of the nodes. The black continuous line segments are unmatched edges in the bipartite graph. The red continuous line segments comprise $\mathcal{M}$, which forms a $n-1$ matching from $\mathcal{I}$ to $W^*$. The pair of pink dotted line segments denote the range of vertices $X_{W^*}$ is incident to. Finally, the pair of green dotted line segments denote the range of vertices $\mathcal{J}$ is incident to.} 
\label{fig:proof}
\end{figure}

We first partition $OPT$ into $OPT = \mathcal{I} \cup \mathcal{J}$, $\mc{I}\cap\mc{J} = \emptyset$, where $\mathcal{I}$ is the set of $n-1$ vertices in $OPT$ corresponding to the max matching $\mc{M}$. Hence, $\mc{I}$ perfectly matches onto $W^*$ with $\mc{M}$. $\mathcal{J}$ consists of the remaining vertices in $OPT$ unmatched by $\mc{M}$. 

WLOG, we assume $\mathcal{J}$ is nonempty. This is because, if $\mathcal{J}$ is empty, we then immediately have $|OPT| = n - 1 \le |X_f|$. 

We then have the following lemma about $\mc{I}$ and $\mc{J}$. 

\begin{lemma} \label{lem:JinXW}
For any such partition $OPT = \mathcal{I} \cup \mathcal{J}$, we have that $\mathcal{J} \subset X_{W^*}$, and $\mathcal{I} \cap X_{W^*}$ is nonempty.
\end{lemma}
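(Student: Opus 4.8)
# Proof Proposal for Lemma \ref{lem:JinXW}

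The plan is to establish the two claims separately, both by exploiting the defining property of $W^*$: it is precisely the set of $n-1$ vertices in $Y$ saturated by the maximum matching $\mc{M}$ of the induced subgraph $G(OPT)$, with $v = Y \setminus W^*$ the single leftover vertex. Recall that $X_{W^*} = \{x \in X \mid N(x) \subseteq W^*\}$ is exactly the set of rows whose only possible nonzero entries avoid the column $v$.

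For the first claim, $\mathcal{J} \subset X_{W^*}$, I would argue by contradiction. Suppose some vertex $u \in \mathcal{J}$ has a neighbor at $v$, i.e. $u \notin X_{W^*}$ means $N(u) \not\subseteq W^*$, so there is an edge $(u,v) \in S$. Since $u \in \mathcal{J}$, it is unmatched by $\mc{M}$, and since $v = Y\setminus W^*$ is the unique vertex of $Y$ left uncovered by $\mc{M}$, the edge $(u,v)$ joins two unmatched vertices. This single edge is therefore an augmenting path with respect to $\mc{M}$ within $G(OPT)$ (note both endpoints lie in $OPT$ and $W^* \cup \{v\} = Y$), so $\mc{M} \oplus \{(u,v)\}$ is a matching of size $n$ in $G(OPT)$. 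This contradicts the maximality of $\mc{M}$ as a max matching of size $n-1$ in $G(OPT)$. Hence every $u \in \mathcal{J}$ satisfies $N(u) \subseteq W^*$, i.e. $\mathcal{J} \subset X_{W^*}$.

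For the second claim, that $\mathcal{I} \cap X_{W^*}$ is nonempty, I would use the hypothesis (carried over WLOG) that $\mathcal{J}$ is nonempty, and reason about augmenting/alternating paths. Pick any $u \in \mathcal{J}$; by the first claim $u \in X_{W^*}$, and $u$ is matched into $W^*$ through \emph{some} neighbor only via the matching edges of $\mc{I}$ — more precisely, since $G(OPT)$ has max matching size exactly $n-1 = |W^*|$, the maximality forbids any augmenting path starting at the unmatched vertex $u$. The idea is that $u$ must have a neighbor $w \in W^*$ (it has at least one neighbor since $A^S$ has no zero rows in the relevant restriction, or because $u \in OPT$ participates nontrivially), and the vertex $i \in \mathcal{I}$ matched to $w$ under $\mc{M}$ is a candidate for membership in $\mathcal{I} \cap X_{W^*}$. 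I would then show $i \in X_{W^*}$, i.e. $N(i) \subseteq W^*$, again by an augmenting-path contradiction: if $i$ had a neighbor at $v$, following the alternating path $u \to w \to i \to v$ (unmatched edge, matched edge, unmatched edge) would yield an augmenting path from the unmatched $u$ to the unmatched $v$, contradicting maximality of $\mc{M}$.

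The main obstacle I anticipate is the second claim, specifically guaranteeing that the vertex $u \in \mathcal{J}$ actually \emph{has} a neighbor in $W^*$ so that the matched partner $i \in \mathcal{I}$ exists, and then ruling out \emph{all} ways $i$ could fail to lie in $X_{W^*}$ rather than just the single three-edge path. The clean way to handle this is to consider the set of \emph{all} vertices in $OPT$ reachable from $u$ by alternating paths; maximality of $\mc{M}$ means no such path reaches an unmatched vertex of $Y$, which is exactly $v$. Every matched $Y$-vertex reached lies in $W^*$, and every $X$-vertex reached via these paths must then avoid $v$ entirely (otherwise an augmenting path to $v$ appears), placing the reached matched $\mathcal{I}$-vertices inside $X_{W^*}$. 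Since $u$ is not isolated in $G(OPT)$, this reachable set contains at least one matched $\mathcal{I}$-vertex, giving $\mathcal{I} \cap X_{W^*} \ne \emptyset$. I expect the bookkeeping of which endpoints lie in $X$ versus $Y$ and confirming all edges stay inside $G(OPT)$ to be the delicate part, which is presumably why the authors provide Figure \ref{fig:proof}.
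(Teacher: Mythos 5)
Your proof is correct and follows essentially the same route as the paper's: the edge $(u,v)$ as a one-edge augmenting path for the first claim, and the three-edge augmenting path $u \to w \to i \to v$ for the second (the paper merely phrases the latter as a contradiction from assuming $\mathcal{I} \cap X_{W^*} = \emptyset$ rather than directly exhibiting the matched partner $i$). The one caveat you flag---that $u \in \mathcal{J}$ must actually have a neighbor, i.e.\ that there are no zero rows---is assumed implicitly and without comment in the paper's proof as well, so you are if anything slightly more careful.
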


\begin{proof}
Let $OPT$ be partitioned into $\mathcal{I} \cup \mathcal{J}$. Suppose $j \in \mathcal{J}$. If $j \notin X_{W^*}$, then $j$ is incident to $v$, which means $\mathcal{I} \cup \{j\}$ has a perfect matching onto $Y$. This contradicts $OPT$ has no perfect matching onto $Y$. Now suppose $\mathcal{I} \cap X_{W^*}$ is empty. This means every vertex in $\mathcal{I}$ is incident to $v$. Since $\mathcal{I}$ has a perfect matching onto $W^{*}$ and vertices in $\mathcal{J}$ are incident to vertices in $W^{*}$, it follows there exists an augmenting path from any vertex in $\mathcal{J}$ to $v$, which is a contradiction to $sprank(A^S_{OPT}) = n-1$. 
\end{proof}
Accordingly, we can partition $X_{W^*} = \mc{C} \cup\mc{J}, ~  \mc{C}\cap\mc{J} = \emptyset$, with $\mc{C}\triangleq X_{W^*} \setminus \mc{J}$.  
Starting from here, the general idea of proving $|X_f| \ge |OPT|$ is to lower bound 
\begin{align}
|X_f| = |X_{W^*}\cup B| = |\mc{C}\cup\mc{J}\cup B| =  |\mc{C}| + |\mc{J}| +  |B|.  \label{eq:tolowerbound}
\end{align}

We immediately have the following lower bound on $|B|$: 
\begin{align}
|B| \ge (n-1) - sprank(A^{S}_{X_{W^*}}). \label{eq:ineb}
\end{align}
This is because a) Algorithm \ref{spcospark} guarantees that $sprank(A^S_{W^*\cup B}) = n-1$, and b) every time we add a new vertex $t$ into $B$ (cf. Line 7 in Algorithm \ref{spcospark}), $sprank(A^S_{W^*\cup B})$ increases by at most one (cf. Lemma \ref{lem:inc1}). Since the initial generic cospark is $sprank(A^{S}_{X_{W^*}})$, we need at least $(n-1) - sprank(A^{S}_{X_{W^*}})$ vertices added into $B$ to reach $sprank(A^S_{W^*\cup B}) = n-1$. 

We next devote the majority of this section to provide a lower bound on $|\mc{C}|$.

\subsection{Lower Bounding $|\mc{C}|$}

The key result we will rely on in this subsection is the following: 
\begin{Theorem} \label{thm:noj}
For the induced bipartite graph $G(X_{W^*})$, there exists a max matching that does not touch any vertices in $\mathcal{J}$. 
\end{Theorem}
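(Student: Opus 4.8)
The plan is to prove this by an exchange (rerouting) argument on the maximum matchings of the induced subgraph $G(X_{W^*})$, leaning on the full matching $\mathcal{M}$ of $G(OPT)$ and on the fact that $OPT$ admits no perfect matching onto $Y$, i.e.\ $sprank(A^S_{OPT}) = n-1$. Among all maximum matchings of $G(X_{W^*})$ I would select one, call it $\mathcal{N}$, that covers the fewest vertices of $\mathcal{J}$, and then argue that this minimum must be zero. The two structural facts I would keep at hand are that every vertex of $\mathcal{J}$ is $\mathcal{M}$-unmatched (by definition of $\mathcal{J}$) and that $\mathcal{J}\subset X_{W^*}$ (Lemma \ref{lem:JinXW}), together with the observation that $v$ is the unique $\mathcal{M}$-unmatched vertex of $Y$ and is never touched by $\mathcal{N}$ because $N(X_{W^*})\subseteq W^* = Y\setminus\{v\}$.

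First I would set up the symmetric difference $\mathcal{N}\oplus\mathcal{M}$. Assuming, for contradiction, that $\mathcal{N}$ covers some $j\in\mathcal{J}$, the vertex $j$ is $\mathcal{N}$-matched but $\mathcal{M}$-unmatched, so it has degree one in $\mathcal{N}\oplus\mathcal{M}$ and is the endpoint of an alternating path component $R$ beginning with an $\mathcal{N}$-edge. The pivotal preliminary step is a parity/side argument that localizes the other endpoint $z$ of $R$: as one walks along $R$ the bipartition side alternates between $X$ and $Y$, and since $v$ is isolated in $\mathcal{N}\oplus\mathcal{M}$ the path cannot terminate at an $\mathcal{M}$-unmatched vertex of $Y$. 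This forces $R$ to end on the $X$-side at a vertex $z$ reached by an $\mathcal{M}$-edge, whence $z\in\mathcal{I}$, $z$ is $\mathcal{N}$-unmatched, and $R$ carries equally many $\mathcal{N}$- and $\mathcal{M}$-edges.

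Next I would split on whether $z\in X_{W^*}$. If $z\in\mathcal{I}\cap X_{W^*}$, then $\mathcal{N}\oplus R$ is again a matching contained entirely in $G(X_{W^*})$ (every intermediate $X$-vertex of $R$ is $\mathcal{N}$-matched, hence lies in $X_{W^*}$), it has the same cardinality as $\mathcal{N}$ because $R$ is balanced, and it leaves $j$ uncovered while only newly covering $z\notin\mathcal{J}$; this strictly decreases the $\mathcal{J}$-count, contradicting minimality. The remaining case $z\in\mathcal{I}\setminus X_{W^*}$ is where the hypothesis $sprank(A^S_{OPT})=n-1$ enters: such a $z$ has a neighbor outside $W^*$, necessarily $v$, so appending the edge $(z,v)$ to $R$ yields an $\mathcal{M}$-alternating augmenting path in $G(OPT)$ from the $\mathcal{M}$-unmatched vertex $j$ to the $\mathcal{M}$-unmatched vertex $v$; augmenting along it would produce a matching of size $n$ on $G(OPT)$, contradicting the maximality of $\mathcal{M}$. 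Since both cases are impossible, the minimizing $\mathcal{N}$ cannot cover $\mathcal{J}$, which is exactly the statement of the theorem.

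I expect the main obstacle to be the bookkeeping that makes the rerouting legitimate rather than any single deep idea: one must verify that $\mathcal{N}\oplus R$ really stays inside the induced subgraph $G(X_{W^*})$, that it does not inadvertently cover some other vertex of $\mathcal{J}$, and that the edge $(z,v)$ appended in the second case genuinely preserves $\mathcal{M}$-alternation at $z$. Getting the parity/side analysis exactly right so that $z$ is pinned to $\mathcal{I}$, and exploiting that $v$ is isolated in $\mathcal{N}\oplus\mathcal{M}$, is the crux that makes all three verifications clean.
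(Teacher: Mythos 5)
Your proof is correct, but it takes a genuinely different route from the paper's. The paper argues \emph{constructively}: it seeds the matching with $\mathcal{M}_p$ (the edges of $\mathcal{M}$ touching $\mathcal{I}\cap X_{W^*}$), proves that all of $N(\mathcal{J})$ is thereby matched (Lemma \ref{lem:njmat}), that no augmenting path can start in $\mathcal{J}$ (Lemma \ref{lem:nojtou}), and that the remaining augmenting paths are vertex-disjoint from everything alternating-reachable from $\mathcal{J}$ (Corollary 1), so that repeated augmentation reaches a maximum matching without ever touching $\mathcal{J}$. You instead use an \emph{extremal/exchange} argument: pick a maximum matching $\mathcal{N}$ of $G(X_{W^*})$ covering the fewest vertices of $\mathcal{J}$ and analyze the component $R$ of $\mathcal{N}\oplus\mathcal{M}$ through a hypothetically covered $j\in\mathcal{J}$. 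Your parity analysis is sound: since $v$ is the unique $\mathcal{M}$-unexposed vertex of $Y$ and is untouched by $\mathcal{N}$, the path $R$ must end at an $\mathcal{M}$-matched, $\mathcal{N}$-unmatched $z\in\mathcal{I}$, and all intermediate $X$-vertices of $R$ lie in $\mathcal{I}\cap X_{W^*}$ (they carry both an $\mathcal{M}$- and an $\mathcal{N}$-edge), so in Case 1 the rerouted matching $\mathcal{N}\oplus R$ stays in $G(X_{W^*})$ and strictly lowers the $\mathcal{J}$-count, while in Case 2 the entire path $R$ plus the edge $(z,v)$ lies in $G(OPT)$ and augments $\mathcal{M}$ to size $n$, contradicting $sprank(A^S_{OPT})=n-1$. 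What your approach buys is economy: it bypasses Lemmas \ref{lem:njmat} and \ref{lem:nojtou} and Corollary 1 entirely (your Case 2 re-derives the substance of Lemma \ref{lem:nojtou} in a localized form), at the cost of being non-constructive, whereas the paper's iterative-augmentation argument mirrors how one would actually compute the matching inside Algorithm \ref{spcospark}.
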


To prove Theorem \ref{thm:noj}, we start with a partial matching $\mathcal{M}_p \subset \mathcal{M}$ consisting only edges that touch $\mathcal{I} \cap X_{W^*}$. In other words, $\mathcal{M}_p = \{(i,j) \in \mathcal{M} | i \in \mathcal{I} \cap X_{W^*}\}$. 
The idea is that we will build a max matching starting from $\mc{M}_p$, and this max matching will not touch any vertices in $\mc{J}$, thus proving Theorem \ref{thm:noj}. 


We have the following two lemmas. 
\begin{lemma} \label{lem:njmat}
For the induced bipartite graph $G(X_{W^*})$ with $\mathcal{M}_p$ as a (not necessarily max) matching, any vertex in $N(\mathcal{J})$ is incident to some edge in $\mathcal{M}_p$, i.e., already matched. 
\end{lemma}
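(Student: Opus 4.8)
The plan is to argue vertex by vertex: fix an arbitrary $w \in N(\mathcal{J})$ and show it must be covered by $\mathcal{M}_p$, exhibiting a short augmenting path whenever this fails. First I would record two elementary facts. Since $\mathcal{J} \subseteq X_{W^*}$ by Lemma \ref{lem:JinXW} and every vertex of $X_{W^*}$ has all of its neighbors inside $W^*$ by the definition of $X_{W^*}$, we have $N(\mathcal{J}) \subseteq W^*$. Moreover $\mathcal{M}_p$, consisting of the edges of $\mathcal{M}$ incident to $\mathcal{I} \cap X_{W^*}$, is genuinely a matching of the induced graph $G(X_{W^*})$: each such edge joins a vertex of $\mathcal{I} \cap X_{W^*} \subseteq X_{W^*}$ to its $\mathcal{M}$-partner in $W^*$, and those partners lie in $N(X_{W^*})$, so both endpoints sit inside $G(X_{W^*})$.

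Now fix an arbitrary $w \in N(\mathcal{J})$. As $w \in W^*$ and $\mathcal{I}$ is perfectly matched onto $W^*$ by $\mathcal{M}$, there is a unique $i \in \mathcal{I}$ with $(i,w) \in \mathcal{M}$. The crux is to show $i \in X_{W^*}$, for then $(i,w) \in \mathcal{M}_p$ and $w$ is matched by $\mathcal{M}_p$, which is exactly the claim. Suppose instead $i \notin X_{W^*}$. By definition of $X_{W^*}$ this forces $i$ to have a neighbor outside $W^*$, and since $v$ is the only vertex of $Y \setminus W^*$, the vertex $i$ must be adjacent to $v$.

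From here I would assemble an augmenting path. Choose $j \in \mathcal{J}$ adjacent to $w$ (possible since $w \in N(\mathcal{J})$) and consider the path $j - w - i - v$. Its three edges alternate unmatched, matched (the edge $(i,w) \in \mathcal{M}$), unmatched, and its endpoints $j$ and $v$ are both unmatched by $\mathcal{M}$: $j$ because $\mathcal{J}$ is precisely the set of $OPT$-vertices left uncovered by $\mathcal{M}$, and $v$ because $v \notin W^*$. The four vertices $j, i \in OPT$ and $w, v \in Y$ are pairwise distinct (using $\mathcal{I} \cap \mathcal{J} = \emptyset$ and $w \in W^*$, $v \notin W^*$), and all three edges live in $G(OPT)$, so this is a genuine augmenting path $P$ with respect to $\mathcal{M}$ inside $G(OPT)$. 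Then $\mathcal{M} \oplus P$ is a matching of $G(OPT)$ of size $|\mathcal{M}| + 1 = n$, contradicting $sprank(A^S_{OPT}) = n-1$, i.e., that the max matching of $G(OPT)$ has size $n-1$. Hence $i \in X_{W^*}$, and since $w \in N(\mathcal{J})$ was arbitrary, every vertex of $N(\mathcal{J})$ is incident to an edge of $\mathcal{M}_p$.

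The work here is mostly careful bookkeeping rather than a deep idea: the one point requiring attention is that the augmentation must contradict the maximality of the $n-1$ matching on the whole of $OPT$, not merely on $X_{W^*}$, which is why I keep all four path vertices inside $G(OPT)$ and invoke $sprank(A^S_{OPT}) = n-1$; one must also confirm up front that $\mathcal{M}_p$ truly restricts to a matching of $G(X_{W^*})$ so that the phrase ``matched by $\mathcal{M}_p$'' is meaningful in the induced graph.
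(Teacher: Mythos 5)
Your proof is correct and follows essentially the same route as the paper's: both identify the $\mathcal{M}$-partner $i\in\mathcal{I}$ of a vertex $w\in N(\mathcal{J})\subseteq W^*$ and rule out $i\notin X_{W^*}$ by exhibiting the length-3 augmenting path $j\text{--}w\text{--}i\text{--}v$, contradicting that $G(OPT)$ has no matching of size $n$. You merely spell out a few details the paper leaves implicit (that $\mathcal{M}_p$ is a matching of the induced graph, and why the augmenting path contradicts $sprank(A^S_{OPT})=n-1$), which is fine.
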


\begin{proof}
First, note any $j \in \mathcal{J}$ is not incident to $v$, so any vertex $k \in N(j)$ is in $W^*$. Now, for any $j \in \mathcal{J}$ and any $k \in N(j)$, we want to prove $k$ is incident to some edge in $\mathcal{M}_p$. Since $k \in W^*$, $k$ is incident to some edge $(i,k) \in \mathcal{M}$. $\mathcal{M}$ is the perfect matching from $\mathcal{I}$ to $W^*$, so certainly, $i$ is in $\mathcal{I}$. On the other hand, $i$ cannot be incident to $v$, or there will exist a length $3$ augmenting path from $j$ to $k$ to $i$ to $v$. Hence, $i \in X_W^*$, and the claim is proven. 
\end{proof}

\begin{lemma} \label{lem:nojtou}
For the induced bipartite graph $G(X_{W^*})$ with $\mathcal{M}_p$ as a (not necessarily max) matching, there exists no augmenting path starting from any $j \in \mathcal{J}$. 
\end{lemma}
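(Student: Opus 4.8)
The plan is to argue by contradiction: suppose some $j \in \mathcal{J}$ is the start of an augmenting path $P$ in $G(X_{W^*})$ with respect to $\mathcal{M}_p$, and use it to manufacture an augmenting path with respect to the \emph{full} matching $\mathcal{M}$ in $G(OPT)$ that reaches the free vertex $v$. Since $\mathcal{M}$ is a maximum matching of $G(OPT)$ of size $n-1$ (equivalently $sprank(A^S_{OPT}) = n-1$), no augmenting path for $\mathcal{M}$ can exist in $G(OPT)$, which gives the contradiction. Because $j$ lies on the $X$-side and is $\mathcal{M}_p$-free, such a $P$ has odd length and terminates at an $\mathcal{M}_p$-free vertex $y_\ell$ on the $Y$-side.

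First I would pin down this terminal vertex. Every $Y$-vertex of $G(X_{W^*})$ lies in $N(X_{W^*}) \subseteq W^*$, and $\mathcal{M}$ matches $\mathcal{I}$ perfectly onto $W^*$; hence $y_\ell$, though $\mathcal{M}_p$-free, is matched under $\mathcal{M}$ by some edge $(i_\ell, y_\ell) \in \mathcal{M}$ with $i_\ell \in \mathcal{I}$. The fact that $y_\ell$ is $\mathcal{M}_p$-free forces $i_\ell \notin X_{W^*}$ (by the definition $\mathcal{M}_p = \{(i,j)\in\mathcal{M} \mid i \in \mathcal{I}\cap X_{W^*}\}$), and since $Y \setminus W^* = \{v\}$ this means $i_\ell$ is incident to $v$, i.e. $(i_\ell, v) \in E$ and $(i_\ell,v)\notin\mathcal{M}$.

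Next I would verify that $P$ is simultaneously alternating with respect to $\mathcal{M}$, using $\mathcal{M}_p \subseteq \mathcal{M}$. Every matching edge of $P$ (under $\mathcal{M}_p$) already lies in $\mathcal{M}$; every non-matching edge of $P$ is incident either to $j$, which is $\mathcal{M}$-free, or to some interior vertex $x_i \in \mathcal{I}\cap X_{W^*}$ whose unique $\mathcal{M}$-edge is its $\mathcal{M}_p$-partner and therefore not the edge of $P$. Thus each such edge is non-matching for $\mathcal{M}$ as well, so $P$ alternates for $\mathcal{M}$ and ends at $y_\ell$ with a non-matching edge. I would then append the matching edge $(y_\ell, i_\ell)$ and the non-matching edge $(i_\ell, v)$, producing a path from the $\mathcal{M}$-free vertex $j$ to the $\mathcal{M}$-free vertex $v$ that alternates correctly, i.e. an $\mathcal{M}$-augmenting path.

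The last point to nail down is that this extension is a genuine \emph{simple} path lying entirely inside $G(OPT)$, and this is where I expect the only real care is needed. All $X$-vertices traversed by $P$ are either $j \in \mathcal{J}$ or $\mathcal{M}_p$-matched vertices of $\mathcal{I} \cap X_{W^*}$, hence members of $OPT$; the appended vertex $i_\ell \in \mathcal{I} \subseteq OPT$ lies outside $X_{W^*}$ and so cannot coincide with any $X$-vertex of $P$, while $v \notin W^*$ cannot coincide with any $Y$-vertex of $P$, so no vertex repeats. Every edge joins an $OPT$-vertex to a vertex of $N(OPT)$, so the whole path sits in $G(OPT)$, and it would yield a matching of size $n$, contradicting $sprank(A^S_{OPT}) = n-1$. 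Lemma \ref{lem:njmat} is consistent with this picture and in fact rules out the degenerate length-one case directly (the first edge of $P$ must reach a matched vertex of $N(\mathcal{J})$), though the extension argument above disposes of augmenting paths of every length uniformly.
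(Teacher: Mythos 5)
Your proposal is correct and follows essentially the same route as the paper: identify that the $\mathcal{M}_p$-free terminal vertex $u$ of the augmenting path is $\mathcal{M}$-matched to some $i \in \mathcal{I}\setminus X_{W^*}$, which is therefore incident to $v$, and extend to an $\mathcal{M}$-augmenting path from $j$ to $v$ in $G(OPT)$, contradicting $sprank(A^S_{OPT}) = n-1$. Your write-up is in fact more careful than the paper's, explicitly verifying that the path alternates with respect to $\mathcal{M}$ (not just $\mathcal{M}_p$) and that the extension is simple.
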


\begin{proof}
For any vertex $ u \in N(X_{W^*}) \setminus N(\mathcal{J})$ that is unmatched w. r. t. $\mathcal{M}_p$, suppose there is an augmenting path from $j$ to $u$ using edges in $\mathcal{M}_p$. If $u$ is unmatched in the induced graph $G(X_{W^*})$ w.r.t $\mathcal{M}_p$ and $u \in W^*$, then there exists an edge $(i,u) \in \mathcal{M} \setminus \mathcal{M}_p$ which is incident to $u$. Because $(i,u) \in \mathcal{M} \setminus \mathcal{M}_p$, $i$ must be incident to $v$. This means if there exists an augmenting path from $j$ to $u$ w.r.t $\mathcal{M}_p$, then there must exist an augmenting path from $j$ to $v$ w.r.t $\mathcal{M}$, which contradicts vertices in $\mathcal{J}$ do not have augmenting paths to $v$. 
\end{proof}

Lemma \ref{lem:nojtou} implies that all augmenting paths w. r. t. the partial matching $\mathcal{M}_p$ are from unmatched vertices in $\mc{C} \setminus \mathcal{I}$ (where $\mc{C} =  X_{W^*} \setminus \mathcal{J}$) to unmatched vertices in $N(X_{W^*}) \setminus N(\mathcal{J})$. A corollary which will prove useful is the following:

\begin{corollary}
Suppose  $P$ is an augmenting path from $c \in \mc{C} \setminus \mathcal{I}$ to $u \in N(X_{W^*}) \setminus N(\mathcal{J})$ w. r. t. the matching $\mathcal{M}_p$. Then for any $j \in \mathcal{J}$, there exists no alternating path w. r. t. $\mathcal{M}_p$ from $j$ to any vertex in $P$. 
\end{corollary}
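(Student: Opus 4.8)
The plan is to argue by contradiction: assuming some $j \in \mathcal{J}$ admits an alternating path (w.r.t. $\mathcal{M}_p$) reaching a vertex of $P$, I will splice that path together with a piece of $P$ to manufacture an \emph{augmenting} path starting at $j$, which directly contradicts Lemma \ref{lem:nojtou}. First I would normalize the hypothetical bad path. Suppose $Q$ is any alternating path from $j$ meeting $P$, and let $w$ be the first vertex of $P$ encountered as we traverse $Q$ from $j$; replacing $Q$ by its prefix up to $w$, we may assume WLOG that $Q$ touches $P$ only at its terminal vertex $w$. Note also that $j \notin P$: the $X$-vertices of $P$ are $c \in \mc{C}\setminus\mathcal{I}$ together with vertices matched by $\mathcal{M}_p$ (hence in $\mathcal{I}\cap X_{W^*}$), and $\mathcal{J}$ is disjoint from both $\mc{C}$ and $\mathcal{I}$, so $Q$ is nontrivial.

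The engine of the argument is a parity observation. Since $j$ is unmatched, $Q$ must begin with an unmatched edge and then alternate; consequently any terminal vertex of $Q$ lying in $X$ is reached via a \emph{matched} edge, while any terminal vertex in $Y$ is reached via an \emph{unmatched} edge. I would use this to eliminate the possibility $w \in X$. If $w = c$, then $c$ would have to be incident to the matched terminal edge, contradicting that $c$ is unmatched. If $w = x_r$ is an internal $X$-vertex of $P$, it is matched, and its unique matched edge is the edge $(y_r, x_r)$ of $P$; the terminal edge of $Q$ would then have to be exactly this edge, forcing $y_r \in Q \cap P$ with $y_r \neq w$, which contradicts that $Q$ meets $P$ only at $w$. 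Hence $w$ must lie in $Y$.

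Writing $w = y_r$, I would then finish by construction. If $y_r = u$ is the unmatched endpoint of $P$, then $Q$ itself is an alternating path from the unmatched vertex $j$ to the unmatched vertex $u$, i.e.\ an augmenting path from $j$, contradicting Lemma \ref{lem:nojtou}. Otherwise $y_r$ is internal, matched by $\mathcal{M}_p$ to $x_r$ via the edge $(y_r,x_r) \in P$, and $Q$ arrives at $y_r$ through an unmatched edge (by the parity observation). I would concatenate $Q$, the matched edge $(y_r,x_r)$, and the ``forward'' subpath of $P$ running $x_r, y_{r+1}, x_{r+1}, \ldots, u$. Alternation holds at the junction (unmatched edge of $Q$ followed by the matched edge $(y_r,x_r)$), and the subpath of $P$ from $x_r$ to $u$ alternates unmatched/matched and ends at the unmatched vertex $u$; simplicity follows because $Q$ meets $P$ only at $y_r$, so the appended $P$-vertices $x_r, y_{r+1}, \ldots, u$ are disjoint from $Q$. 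This yields an augmenting path from $j$ to $u$ with respect to $\mathcal{M}_p$, again contradicting Lemma \ref{lem:nojtou}.

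I expect the main obstacle to be the bookkeeping that certifies the spliced walk is a genuine \emph{simple} augmenting path rather than merely an alternating walk. The crucial enabling step is the minimality reduction (taking the prefix of $Q$ up to its first contact with $P$): it is precisely this that both rules out the $X$-side contacts and guarantees that the matched edge $(y_r,x_r)$ used for splicing does not send us back into an already-visited portion of $P$. Once that normalization is in place, the parity dichotomy does the rest, and Lemma \ref{lem:nojtou} closes every branch.
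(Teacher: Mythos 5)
Your proof is correct and follows essentially the same route as the paper's: assume an alternating path from $j$ first meets $P$ at some vertex, use the parity of alternating paths from the unmatched vertex $j$ to force that contact vertex onto the $Y$-side, splice with the tail of $P$ toward $u$ to produce an augmenting path from $j$, and contradict Lemma \ref{lem:nojtou}. Your treatment is somewhat more careful than the paper's (you explicitly rule out contact at $c$, at internal $X$-vertices, and handle the $w=u$ case), but the underlying argument is the same.
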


\begin{proof}
Let $P$ be an augmenting path from $c$ to $u$ w.r.t. $\mathcal{M}_p$. Suppose there exists an alternating path $P'_{jp}$ from $j$ to a vertex $p$, where $p$ is the first vertex in $P$ encountered when traversing $P'_{jp}$. $P'_{jp}$ must have odd number of edges, since $p$ is a matched vertex in $P$ and $j$ is unmatched. Since $P'_{jp}$ is odd, $p \in N(X_{W^*})$. Hence, if $P_{cp} \subset P$ is the restriction of $P$ from $c$ to $p$, then the alternating path $P_{cp}$ must also have odd length. The total length of $P$ must be odd since $P$ is an augmenting path, which means the length of the alternating path from $p$ to $u$ in $P$ must be even. 

Since $P_{jp}$ is an odd alternating path from $j$ to $p$, and the alternating path from $p$ to $u$ in $P$ is even, then the alternating path from $P_{jp}$ to $u$ is odd. Furthermore, $j$ and $u$ are unmatched,  so this path is actually an augmenting path, which immediately contradicts Lemma \ref{lem:nojtou}.
\end{proof}

From Corollary 1, any alternating path starting from $j$ w. r. t. $\mc{M}_p$ is \emph{vertex disjoint} to any augmenting path $P$. This implies that a) any alternating path from $j$ w. r. t. $\mathcal{M}_p \oplus P$ remains an alternating path, and b) there remains no augmenting path starting from $j$ w. r. t. $\mathcal{M}_p \oplus P$, i.e., \emph{Lemma \ref{lem:nojtou} continues to hold for $G(X_{W^*})$ with a new matching $\mathcal{M}_p \oplus P$}.

We are now ready to prove Theorem \ref{thm:noj}. 


\begin{proof}[Proof of Theorem \ref{thm:noj}]
Take $\mathcal{M}_p$ to be an initial matching onto $N(X_{W^*})$. By Lemma \ref{lem:njmat}, all vertices in $N(\mathcal{J})$ are now matched, and Lemma \ref{lem:nojtou} tells us we are left with augmenting paths starting from unmatched vertices in $\mc{C} \setminus \mathcal{I}$ to unmatched vertices in $N(X_{W^*}) \setminus N(\mathcal{J})$. If $P_1$ is one such augmenting path, then $\mathcal{M}_p \oplus P_1$ is a matching with one greater cardinality. By Corollary 1, all alternating paths w.r.t $\mathcal{M}_p$ starting from $j$ are vertex disjoint to $P_1$, which implies alternating paths starting from $j$ remain unchanged. Furthermore, Corollary 1 tells us $\mathcal{M}_p \oplus P_1$ does not have augmenting paths starting from $j$. Hence, the only remaining augmenting paths are still from vertices $\mc{C} \setminus \mathcal{I}$ to vertices $N(X_{W^*}) \setminus N(\mathcal{J})$. If $P_2$ is such an augmenting path, we can now repeat the above procedure and compute the matching $\mathcal{M}_p \oplus P_1 \oplus P_2$. Again, alternating paths starting from $j$ remain unchanged, and $\mathcal{M}_p \oplus P_1 \oplus P_2$ contains no augmenting paths starting from $j$. We can repeat this procedure until all augmenting paths from $\mc{C} \setminus \mathcal{I}$ to $N(X_{W^*}) \setminus N(\mathcal{J})$ are eliminated. Since the final matching obtained this way has no augmenting paths, this final matching is optimal, and its edges are incident to no vertices in $\mathcal{J}$. 
\end{proof}

As a result of Theorem \ref{thm:noj}, there exists a max matching of the bipartite graph $G(X_{W^*})$ that, on the ``left hand side'' of the graph, only touches vertices in $\mc{C} = X_{W^*}\setminus \mathcal{J}$. Since the size of the max matching of $G(X_{W^*})$ equals to $sprank\left(A^S_{X_{W^*}}\right)$ (cf. Lemma \ref{lem:matchgrank}), we arrive at the following lower bound on $|\mc{C}|$:
\begin{align}
|\mc{C}| \ge sprank\left(A^S_{X_{W^*}}\right). \label{eq:inec}
\end{align}

\subsection{Proof of the Optimality of Algorithm \ref{spcospark}}
We now show that Algorithm \ref{spcospark} indeed returns the generic cospark as in the following theorem. 
\begin{Theorem}
For the $X_f$ that Algorithm \ref{spcospark} returns, we have that $|X_f| = |OPT|$.
\end{Theorem}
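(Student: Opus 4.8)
The plan is to prove $|X_f| \ge |OPT|$ (which together with the fact that $X_f$ already induces a subgraph with max matching $n-1$ gives equality, since $OPT$ is a \emph{largest} such set). I would do this by showing that the particular run of Algorithm \ref{spcospark} which starts from $W^*$ produces a set $\overline{X_{W^*}} = X_{W^*}\cup B$ with $|\overline{X_{W^*}}| \ge |OPT|$; since $X_f$ is chosen to maximize $|\overline{X_W}|$ over all $W$, this immediately yields $|X_f| \ge |\overline{X_{W^*}}| \ge |OPT|$. The whole argument therefore reduces to lower-bounding $|\overline{X_{W^*}}|$, and the decomposition \eqref{eq:tolowerbound} together with the two bounds \eqref{eq:ineb} and \eqref{eq:inec} is exactly what is needed.

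The key steps, in order, are as follows. First I would recall the partition $X_{W^*} = \mc{C}\cup\mc{J}$ and the exact count $|X_f| \ge |\overline{X_{W^*}}| = |\mc{C}| + |\mc{J}| + |B|$ from \eqref{eq:tolowerbound}. Second, I would substitute the established lower bounds: from \eqref{eq:inec} we have $|\mc{C}| \ge sprank(A^S_{X_{W^*}})$, and from \eqref{eq:ineb} we have $|B| \ge (n-1) - sprank(A^S_{X_{W^*}})$. Adding these two gives $|\mc{C}| + |B| \ge (n-1)$, and crucially the $sprank(A^S_{X_{W^*}})$ terms cancel. Third, I would add $|\mc{J}|$ to both sides to obtain
\begin{align}
|X_f| \ge |\mc{C}| + |\mc{J}| + |B| \ge (n-1) + |\mc{J}|. \nn
\end{align}
Finally, I would observe that $|OPT| = |\mc{I}| + |\mc{J}| = (n-1) + |\mc{J}|$, since $\mc{I}$ is precisely the $n-1$ vertices matched by $\mc{M}$ and $\mc{J}$ are the remaining $OPT$ vertices. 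This gives $|X_f| \ge |OPT|$, and equality follows by optimality of $OPT$.

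I anticipate that essentially no new obstacle remains at this stage: all the heavy lifting has already been done in establishing Theorem \ref{thm:noj} (hence the bound \eqref{eq:inec}) and in the greedy/rank-increment analysis of Lemma \ref{lem:inc1} (hence \eqref{eq:ineb}). The one point requiring care is ensuring the cancellation is valid regardless of the order in which Algorithm \ref{spcospark} visits vertices in $X_{W^*}^c$ — the Remark flags that $B$ may vary in size, but the lower bound \eqref{eq:ineb} holds for \emph{any} resulting $B$ because the termination condition forces $sprank(A^S_{X_{W^*}\cup B}) = n-1$ exactly. Thus the bound is robust to the algorithmic nondeterminism. The only genuinely subtle prerequisite is that $\mc{C}$ and $B$ are accounted for without double-counting: $\mc{C} \subseteq X_{W^*}$ while $B \subseteq X_{W^*}^c$ by construction, so they are disjoint, and $\mc{J} \subseteq X_{W^*}$ is disjoint from $B$ as well, making the additive count in \eqref{eq:tolowerbound} exact rather than merely an upper bound on the union's cardinality.
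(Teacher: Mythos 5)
Your proposal is correct and follows essentially the same route as the paper: it combines the decomposition \eqref{eq:tolowerbound} with the bounds \eqref{eq:inec} and \eqref{eq:ineb}, cancels the $sprank(A^S_{X_{W^*}})$ terms, and identifies $|OPT| = (n-1) + |\mc{J}|$. Your added remarks on the disjointness of $\mc{C}$, $\mc{J}$, and $B$ and on robustness to the order of visiting $X_{W^*}^c$ are correct points that the paper leaves implicit.
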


\begin{proof}
By the definition of $OPT$, $|X_f| \le |OPT|$. To prove  $|X_f| \ge |OPT|$, starting from \eqref{eq:tolowerbound},  
\begin{align}
|X_f| & =  |\mc{C}| + |\mc{J}| +  |B| \\
& \ge sprank(A^{S}_{X_{W^*}}) + |\mathcal{J}| + |B| \label{eq:ine1} \\
& \ge sprank(A^{S}_{X_{W^*}}) + |\mathcal{J}| + (n-1) - sprank(A^{S}_{X_{W^*}}) \label{eq:ine2} \\
& = |\mathcal{J}| + (n-1) = |\mathcal{I}| + |\mathcal{J}| = |OPT|, 
\end{align}
where \eqref{eq:ine1} is from \eqref{eq:inec}, and \eqref{eq:ine2} is from \eqref{eq:ineb}. 
\end{proof}

\section{Algorithm Complexity}

We now show that Algorithm \ref{spcospark} is efficient, and provide an upper bound on its computational complexity. 

\begin{Theorem}\label{thm:comp}
Given any $S$, 
Algorithm \ref{spcospark} computes $spcospark(A^{S})$ in $\mathcal{O}(nm(1+|S|))$ time. 
\end{Theorem}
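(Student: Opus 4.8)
The plan is to bound the runtime as (number of outer-loop iterations) $\times$ (cost per iteration), and to show that each iteration costs $\mathcal{O}(m(1+|S|))$. First I would observe that the outer \textbf{for} loop ranges over every $W \subset Y$ with $|W| = n-1$; since $|Y| = n$, each such $W$ is the complement of a single vertex $v = Y \setminus W$, so there are exactly $\binom{n}{n-1} = n$ of them and the loop runs exactly $n$ times. It therefore suffices to establish the per-iteration bound.

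Fixing one iteration, I would split its work into three parts. Computing $X_W = \{x \in X : N(x)\subseteq W\}$ (the rows not adjacent to $v$) and $T = X_W^c$ takes a single pass, $\mathcal{O}(m)$. The dominant cost is the generic-rank evaluations in Line 5 and the \textbf{while} loop. By Lemma \ref{lem:matchgrank}, each such evaluation is the size of a maximum bipartite matching, and the crucial idea is to \emph{maintain} one running matching rather than recompute it from scratch: we insert the rows of $X_W$ one by one, and then insert candidate rows of $T$ one by one in the while loop. By Lemma \ref{lem:inc1}, inserting a single vertex raises the generic rank by at most one, so testing whether the rank increases reduces to a single augmenting-path search launched from the newly inserted vertex. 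Each augmenting-path search is one DFS/BFS that traverses each edge a constant number of times and so costs $\mathcal{O}(|S|)$; the number of insertions in the whole iteration is $|X_W| + |T| = m$, giving $\mathcal{O}(m|S|)$ for all rank work. The remaining bookkeeping (comparing $|\overline{X_W}|$ with $|X_f|$, updating $X_f$, resetting $B$) is $\mathcal{O}(m)$. Summing, one iteration costs $\mathcal{O}(m(1+|S|))$, and over the $n$ iterations the total is $\mathcal{O}(nm(1+|S|))$.

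The step I expect to be the main obstacle is justifying the $\mathcal{O}(|S|)$ charge per rank query: naively invoking Hopcroft--Karp \cite{hopcroft1973n} afresh each time would cost $\mathcal{O}(|S|\sqrt{m+n})$ per query and break the bound. The argument must instead carry the running matching across successive insertions and appeal to Lemma \ref{lem:inc1} so that a \emph{single} augmenting-path search certifies each rank update; I would also note that, because $sprank(A^S)=n$ forces every column to carry at least one edge, the vertices reachable in any such search number $\mathcal{O}(|S|)$, so isolated rows never inflate the cost. The ``$1+$'' in the stated bound simply absorbs the degenerate regime of very small $|S|$, where the $\mathcal{O}(m)$ scanning and bookkeeping dominate.
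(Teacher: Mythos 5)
Your proof is correct and follows essentially the same approach as the paper: $n$ outer iterations, an $\mathcal{O}(m)$ scan for $X_W$, and crucially the maintenance of a running matching so that each row insertion in the while loop costs one $\mathcal{O}(|S|)$ augmenting-path search rather than a fresh matching computation. The only (immaterial) difference is that the paper computes $sprank(A^S_{X_W})$ in Line 5 via a from-scratch Hopcroft--Karp call at $\mathcal{O}(|S|\sqrt{m+n})$, whereas you build it incrementally as well; both yield $\mathcal{O}(m(1+|S|))$ per iteration and $\mathcal{O}(nm(1+|S|))$ overall.
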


\begin{proof}
Observe in the pseudocode above, step 3 is over $n$ iterations. For each iteration, steps 4 to 9 are the most computationally expensive. Step 4 requires a $\mathcal{O}(m)$ scan of the rows of $A^{S}$, and step 5 requires us to compute a perfect matching using Hopcroft-Karp algorithm, which can be done in $\mathcal{O}(|S|\sqrt{m+n})$ time. 

For the loop in steps 6 to 9, we do not need to recalculate $sprank(A^{S}_{\{X_{W} \cup B \}})$ every iteration. Given we know the max matching from the previous iteration, we only need to check if the new vertex $t$ added to $B$ has an augmenting path to an unmatched vertex in $Y$. Searching for this augmented path requires us to use breadth first search (BFS) or depth first search (DFS), which can be computed in $\mathcal{O}(|S|)$ time. Since there are $\mathcal{O}(m)$ iterations in the while loop, the total cost of steps 6 to 9 is $\mathcal{O}(m|S|)$. 

Hence, for every iteration of step 3, the total cost is $\mathcal{O}(m + |S|\sqrt{m+n} + m|S|) =  \mathcal{O}(m(1+|S|))$ since $n \le m$. It follows immediately our total running time is $\mathcal{O}(nm(1+|S|))$.
\end{proof}

From Theorem \ref{thm:comp}, if $A^{S}$ is extremely sparse, the running time of Algorithm \ref{spcospark} is essentially \emph{quadratic}. 

\begin{RK}
The algorithm's bottleneck is in steps 6-9. For each row $t$ to add, we need to use a BFS. Since we need to add $\mathcal{O}(m)$ such vertices, the total complexity for these steps is $\mathcal{O}(m|S|)$ as in the above proof. To improve this complexity, we would like to detect multiple candidate rows to add to $B$ using a single BFS. Indeed, it can be shown further that steps 6-9 of Algorithm \ref{spcospark} can be improved to $\mathcal{O}(\sqrt{m}|S|)$ based on an idea similar to Hopcroft-Karp matching \cite{hopcroft1973n}. This will improve the total running time of Algorithm \ref{spcospark} to $\mathcal{O}(n\sqrt{m}|S|)$ . Details are omitted here. 
\end{RK}

\section{Experimental Results for Verification}

We compare the results from our algorithm of finding the \emph{generic cospark} to a brute force algorithm of finding the \emph{cospark}. 
Because the brute force algorithm has a computational complexity of $\mathcal{O}(m^n)$, we limit the size of the test matrices to $m = 20$ and $n = 5$. 

We run our comparison over 10 different sparsity levels spaced equally between zero and one. For each sparsity level, we generate 50 matrices, where the locations of the nonzero entries are chosen uniformly at random given the sparsity level, and the values of the non-zero entries are drawn from independent uniform distributions in $[0,1]$. For each of these 50 matrices, we compare the generic cospark given by Algorithm \ref{spcospark} versus that given by the brute force method. In every case, the solutions of both algorithms match. These results support the fact that our algorithm not only computes the generic cospark in polynomial time, but also obtains the actual cospark w. p. 1 if the non-zero entries are drawn from independent continuous probability distributions. 
%
%
%

\section{Conclusion}
We have shown that, although computing the cospark of a matrix is an NP hard problem, computing the generic cospark can be done in polynomial time. We have shown that, given any sparsity pattern of a matrix, the cospark is always upper bounded by the generic cospark, and is equal to the generic cospark with probability one if the nonzero entries of the matrix are drawn from independent continuous probability distributions. An efficient algorithm is developed that computes generic cospark in polynomial time. 


\bibliographystyle{IEEEtran}
\bibliography{refs}

\end{document}